\documentclass[letterpaper, 9pt, conference]{ieeeconf}

\IEEEoverridecommandlockouts

\usepackage{graphics}
\usepackage{epsfig}
\usepackage{times}
\usepackage{amsmath}
\usepackage{amssymb}
\usepackage{subcaption}
\usepackage{url}
\usepackage{bbm, dsfont}
\usepackage{mathtools}
\usepackage{multirow}
\usepackage{cite}
\usepackage{color}
\usepackage[normalem]{ulem}
\usepackage{algorithm}
\usepackage{algpseudocode}
\usepackage{tikz-cd}

\newtheorem{theorem}{Theorem}
\newtheorem{lemma}[theorem]{Lemma}
\newtheorem{proposition}[theorem]{Proposition}
\newtheorem{definition}[theorem]{Definition}
\newtheorem{remark}[theorem]{Remark}

\title{\LARGE \bf
Linear Systems as Representations of Time Groups}

\author{Subhrajit Sinha\\
\thanks{S. Sinha is with Pacific Northwest National Laboratory, Richland, WA 99354 USA. email: \texttt{subhrajit.sinha@pnnl.gov}.}
}

\begin{document}

\maketitle
\thispagestyle{empty}
\pagestyle{empty}

\begin{abstract}
In this paper, we develop a representation-theoretic formulation of
discrete-time linear systems. We show that such systems are naturally viewed as
representations of time groups acting on vector spaces, thereby endowing the
state space with a canonical algebraic structure. This perspective provides a
unified framework for linear systems over different fields, in which familiar
structural properties arise from the underlying representation. In particular,
invariant decompositions of the state space correspond to invariant
subrepresentations, while the distinctions between real, complex, and
finite-field systems emerge from the algebraic properties of the base field and
the time group. We further show that linear systems over finite fields
naturally correspond to representations of finite cyclic time groups, leading to
module structures over polynomial quotient rings. This provides a systematic
alternative to spectral analysis in settings where eigenvalue-based methods are
not the most natural organizing language.
\end{abstract}

\section{Introduction}\label{sec_intro}

A discrete-time linear system is classically described by the equation
\begin{eqnarray*}
x_{t+1} = A x_t,
\end{eqnarray*}
where $x_t$ is the state of the system and $A$ is the state transition matrix
\cite{kailath_linear_book}. This formulation is standard and extremely useful and it
captures the evolution of the system through repeated application of a linear
map, so that for an initial condition $x_0$, the state at time $t$ is given by
$x_t = A^t x_0$.

At the same time, this classical description conceals an important structural
point. The matrix $A$ is not itself an intrinsic object: it depends on a choice
of basis for the state space, and different choices of coordinates lead to
different matrices related by similarity transformations while describing the
same underlying dynamics \cite{kailath_linear_book}. Thus, although linear systems
are typically introduced in terms of matrices, the system itself should not be
identified with any particular matrix realization.

What is intrinsic is the action of time on the state space. Indeed, the dynamics
are determined not merely by the single matrix $A$, but by the family
$\{A^t\}_{t \in \mathbb{Z}}$ (or $\mathbb{N}$ in the noninvertible case), which
encodes how the system evolves over time. This naturally leads to the observation
that a discrete-time linear system defines a homomorphism
\begin{eqnarray}
\rho : G \to GL_k(V),
\end{eqnarray}
where $G$ represents the algebraic structure of time and $V$ is the state space
over a field $k$. In the classical invertible case, $G = \mathbb{Z}$; in periodic
settings, one is naturally led to finite cyclic groups. From this perspective, a
linear system is best understood not as a matrix, but as a linear representation
of time.

This shift in viewpoint has several consequences. First, it yields an intrinsic,
coordinate-free formulation of linear state evolution. Second, it shows that the
algebraic structure associated with linear systems is not auxiliary, but natural:
by a standard result from representation theory, such systems are equivalently
modules over the group algebra $k[G]$
\cite{serre_linear_representation_book,dummit_foote,lang_algebra_book}. Thus,
familiar constructions from linear systems theory may be reinterpreted as
structural features of a representation of time. Third, it makes clear that the
behavior of a linear system depends not only on the evolution law, but also on
the underlying field. Over $\mathbb{R}$ and $\mathbb{C}$, this leads to the
familiar spectral picture; over finite fields, the same representation-theoretic
structure remains intact even when spectral methods are no longer the most natural
organizing language \cite{lidl_niederreiter_finite_fields}.

This perspective should also be contrasted with the behavioral approach to
systems theory \cite{willems_behavioral_1986,polderman_willems,willems_polderman_book,oberst_multidimensional}. In the behavioral framework, a linear dynamical system is defined intrinsically as a shift-invariant linear subspace of trajectory space, and the emphasis is placed on the set of admissible trajectories rather than on a state evolution law. The present work is closer in spirit to the state-space viewpoint, but seeks to identify the intrinsic object underlying that formulation. In this sense, the contribution here is not a behavioral definition of a system, but a representation-theoretic formulation of linear state evolution.

The purpose of this paper is to make this representation-theoretic structure
explicit and to develop some of its consequences for linear systems theory. We
show that classical structural results arise naturally within this framework, with
invariant decompositions corresponding to invariant subrepresentations and with
the distinction between real, complex, and finite-field systems emerging from the
algebraic properties of the field and the time group. In particular, we show that
linear systems over finite fields fit naturally into the same framework, leading
to module structures over polynomial quotient rings and providing a systematic
alternative to purely spectral descriptions.

Taken together, these observations suggest that, at a more fundamental level, the
study of linear systems is the study of linear representations of time groups.

\section{Revisiting Linear Systems}\label{sec_linear_system}

Consider a deterministic discrete-time linear system \cite{kailath_linear_book}
\begin{eqnarray}\label{linear_sys}
x_{t+1} = Ax_t,
\end{eqnarray}
where $x_t \in \mathbb{R}^n$ is the state at time $t \in \mathbb{Z}$, and
$A \in M_n(\mathbb{R})$ is the state transition matrix. For the moment, assume
$A \in GL(\mathbb{R}^n)$ so that the dynamics are invertible. This is the
standard state-space description of a linear system. Since it is expressed in
terms of a matrix, it depends on a choice of basis for the state space and is
therefore not intrinsic.

Given an initial condition $x_0$, the state at time $t$ is $x_t = A^t x_0$. Thus, the evolution of the system is completely described by the family
$\{A^t\}_{t \in \mathbb{Z}}$. This defines a map
\begin{eqnarray}\label{rho_first_pass}
\rho : \mathbb{Z} \to GL(\mathbb{R}^n), \qquad t \mapsto A^t,
\end{eqnarray}
and the developments in this paper begin with the observation that a linear
system naturally gives rise to the map $\rho$ in \eqref{rho_first_pass}.

\begin{lemma}\label{lemma_rep_def}
The map $\rho$ in \eqref{rho_first_pass} satisfies
\begin{eqnarray*}
\rho(t_1 + t_2) = \rho(t_1)\rho(t_2)
\end{eqnarray*}
for all $t_1,t_2 \in \mathbb{Z}$, and $\rho(0) = I_n$.
\end{lemma}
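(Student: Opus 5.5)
The plan is to reduce the statement to the exponent law for integer powers of an invertible matrix, $A^{t_1+t_2} = A^{t_1}A^{t_2}$ for all $t_1,t_2\in\mathbb{Z}$. First I fix the meaning of $A^t$ for every integer $t$. I set $A^0 = I_n$. For $t>0$, $A^t$ is the $t$-fold product of $A$. For $t<0$, I set $A^t = (A^{-1})^{-t}$, which makes sense because $A \in GL(\mathbb{R}^n)$. With this convention, $\rho(0)=A^0=I_n$ holds immediately.

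For the multiplicative law, I will avoid an explicit case split on signs by using the universal property of $\mathbb{Z}$. The map $\phi:\mathbb{Z}\to GL(\mathbb{R}^n)$ defined by $\phi(t)=A^t$ is exactly the cyclic-subgroup power map. Since $GL(\mathbb{R}^n)$ is a group, the standard fact from elementary group theory gives $g^{m+n}=g^m g^n$ for all $m,n\in\mathbb{Z}$ and every group element $g$.

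To be self-contained, I would sketch the proof of that fact in three steps.
\begin{enumerate}
\item For $t_1,t_2\ge 0$, use induction on $t_2$ together with associativity: $A^{t_1+t_2+1}=A^{t_1+t_2}A = A^{t_1}A^{t_2}A = A^{t_1}A^{t_2+1}$.
\item For $t_1,t_2\le 0$, apply the nonnegative case to $A^{-1}$.
\item For mixed signs, say $t_1\ge 0>t_2$, write the product $A^{t_1}(A^{-1})^{-t_2}$ and cancel adjacent factors $AA^{-1}=I_n$. This leaves $A^{t_1+t_2}$ if $t_1+t_2\ge0$, and $(A^{-1})^{-(t_1+t_2)}$ otherwise. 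Both agree with the definition of $A^{t_1+t_2}$.
\end{enumerate}

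The only point needing care is the mixed-sign case. There the cancellation has to be organized as an induction on $\min(t_1,-t_2)$, so that the argument is not hand-waved. I expect no real obstacle, since the lemma is precisely the statement that $t\mapsto A^t$ is a group homomorphism from $(\mathbb{Z},+)$ to $GL(\mathbb{R}^n)$.
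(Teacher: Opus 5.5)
Your proposal is correct and follows the same route as the paper, which simply writes $\rho(t_1+t_2)=A^{t_1+t_2}=A^{t_1}A^{t_2}=\rho(t_1)\rho(t_2)$ and $\rho(0)=A^0=I_n$, taking the integer exponent law in $GL(\mathbb{R}^n)$ for granted. Your case analysis (nonnegative exponents by induction, nonpositive exponents via $A^{-1}$, mixed signs by cancellation) only supplies the standard justification of that exponent law, which the paper leaves implicit.
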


\begin{proof}
For $t_1,t_2 \in \mathbb{Z}$,
\begin{eqnarray*}
\rho(t_1 + t_2) = A^{t_1 + t_2} = A^{t_1} A^{t_2} = \rho(t_1)\rho(t_2),
\end{eqnarray*}
and $\rho(0) = A^0 = I_n$.
\end{proof}

Lemma \ref{lemma_rep_def} shows that the family $\{A^t\}_{t\in\mathbb{Z}}$ is not
an arbitrary collection of operators, but is organized by the additive structure
of time. In particular, the evolution over a combined time interval is obtained
by composing the evolutions over the constituent intervals. Conversely, any map
$\rho : \mathbb{Z} \to GL(\mathbb{R}^n)$ satisfying
$\rho(t_1+t_2)=\rho(t_1)\rho(t_2)$ is completely determined by the single
operator $\rho(1)$, since $\rho(t)=\rho(1)^t$ for all $t \in \mathbb{Z}$.
Thus, the dynamics of an invertible linear system are fully encoded in how the
unit time step acts on the state space.

Thus, the matrix $A$ may be viewed not merely as a linear operator, but as the
generator of a time-indexed family of operators satisfying a compatibility
condition with the additive structure of time. At the same time, this point of
view shows that the matrix itself is not the intrinsic object of the theory.
Different choices of basis lead to different matrix realizations of the same
underlying dynamics. In this sense, the object of primary interest is not $A$ in
isolation, but the time action encoded by $\rho$. This shift in perspective
reveals that the dynamics already carry an underlying algebraic structure, and
it is this structure that will be developed in the remainder of the paper.

\begin{remark}
If $A$ is not invertible, the same construction yields a map
$t \mapsto A^t$ from the semigroup $(\mathbb{N},+)$ into
$\mathrm{End}(\mathbb{R}^n)$ satisfying the same composition property.
In this case, the evolution is still governed by the additive structure of time,
but without the requirement of invertibility.
\end{remark}

\section{From Linear Systems to Group Actions}\label{sec_group_action}

The map $\rho$ in Lemma \ref{lemma_rep_def} is elementary, but in this work we
take it as the structural starting point and once the evolution of a linear system
is viewed through the map
\begin{eqnarray*}
\rho : \mathbb{Z} \to GL(\mathbb{R}^n), \qquad t \mapsto A^t,
\end{eqnarray*}
the natural mathematical language is that of group actions and representation
theory.

In general, representation theory studies how groups act on structured spaces
\cite{serre_linear_representation_book,lang_algebra_book,rotman_group_theory_book}. Let $G$ be a group and $X$
a set. A left action of $G$ on $X$ is a map $\phi:G\times X\to X$ such that
\begin{eqnarray*}
g\cdot(h\cdot x)=(gh)\cdot x,\qquad e\cdot x=x.
\end{eqnarray*}
Equivalently, such an action is described by a homomorphism
\begin{eqnarray}
\rho:G\to \mathrm{Aut}(X),
\end{eqnarray}
where $\mathrm{Aut}(X)$ denotes the structure-preserving bijections of $X$.

\begin{definition}[Representation of a Group \cite{lang_algebra_book,rotman_group_theory_book}]
A representation of a group $G$ on a space $X$ is a homomorphism
\begin{eqnarray*}
\rho:G\to \mathrm{Aut}(X).
\end{eqnarray*}
\end{definition}

If $X=V$ is a vector space, then $\rho:G\to GL(V)$ is called a linear
representation. Thus, linear representation theory studies group actions on
vector spaces via invertible linear maps
\cite{serre_linear_representation_book,dummit_foote,lang_algebra_book}.

From this point of view, \emph{the discrete-time system \eqref{linear_sys} is
precisely an action of the time group $\mathbb{Z}$ on $\mathbb{R}^n$ by linear
transformations.}

\begin{proposition}\label{prop_group_algebra_module}
Let $G$ be a group, let $k$ be a field, and let $V$ be a vector space over $k$.
Then a linear representation
\begin{eqnarray}
\rho:G \to GL(V)
\end{eqnarray}
is equivalent to giving $V$ the structure of a left $k[G]$-module, where
$k[G]$ denotes the group algebra of $G$ over $k$.
\end{proposition}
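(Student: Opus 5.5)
The plan is to construct the correspondence explicitly in both directions and then check that the two constructions are mutually inverse. Throughout, $k[G]$ is the $k$-vector space with basis $\{e_g\}_{g\in G}$, with multiplication defined by $e_g e_h = e_{gh}$ on basis elements and extended bilinearly. It is an associative $k$-algebra with unit $e_e$.

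\emph{From representation to module.} Given $\rho:G\to GL(V)$, define the action on basis elements by $e_g\cdot v := \rho(g)v$, and extend linearly:
\begin{eqnarray*}
\Big(\sum_g a_g e_g\Big)\cdot v := \sum_g a_g\,\rho(g)v .
\end{eqnarray*}
The sum is finite because elements of $k[G]$ have finite support. Distributivity in both arguments is immediate: in the module element it holds because each $\rho(g)$ is linear, and in the algebra element it holds by construction. Compatibility of scalars, $(a x)\cdot v = a(x\cdot v) = x\cdot(av)$, also follows from linearity of each $\rho(g)$. Associativity reduces by bilinearity to basis elements, where
\begin{eqnarray*}
e_g\cdot(e_h\cdot v)=\rho(g)\rho(h)v=\rho(gh)v=e_{gh}\cdot v=(e_ge_h)\cdot v .
\end{eqnarray*}
Finally, $e_e\cdot v=\rho(e)v=v$, since a homomorphism sends the identity to the identity.

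\emph{From module to representation.} Suppose $V$ is a left $k[G]$-module whose $k$-structure agrees with the restriction along $k\hookrightarrow k[G]$, $a\mapsto a e_e$. Define $\rho(g)v := e_g\cdot v$. Each $\rho(g)$ is $k$-linear because $e_g$ commutes with the central scalars $a e_e$. Associativity of the module action gives $\rho(gh)=\rho(g)\rho(h)$, and unitality gives $\rho(e)=\mathrm{id}_V$. Together these show $\rho(g)\rho(g^{-1})=\rho(g^{-1})\rho(g)=\mathrm{id}_V$, so $\rho(g)\in GL(V)$ and $\rho$ is a homomorphism.

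\emph{Mutual inverseness.} Starting from $\rho$, the module built in the first step recovers $\rho$ by evaluating the action on basis elements $e_g$. Starting from a module, the action of an arbitrary element $\sum a_g e_g$ is determined by bilinearity from the action of the $e_g$. The two constructions are therefore inverse bijections. I would also note that the same construction matches intertwiners with $k[G]$-module homomorphisms, since a linear map commutes with every $\rho(g)$ if and only if it commutes with every $e_g$ and hence with all of $k[G]$. This makes ``equivalent'' precise as an equivalence of categories, which is what later results on subrepresentations and submodules will use.

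The main obstacle is bookkeeping rather than depth. The key subtlety is invertibility of $\rho(g)$ in the module-to-representation direction: it must be derived from the group inverse together with unitality, not assumed. The other point needing care is making explicit that the $k$-vector space structure on $V$ is the one induced from $k e_e\subset k[G]$.
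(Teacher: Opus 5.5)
Your proposal is correct and follows essentially the same route as the paper: you extend $\rho$ linearly to a $k[G]$-action and check associativity and unitality, and conversely you set $\rho(g)v = g\cdot v$, with invertibility coming from $\rho(g^{-1})$. Your extra checks add detail the paper leaves implicit without changing the argument: that the $k$-structure on $V$ is the one induced from $k e_e$ (which is what makes each $\rho(g)$ linear), that the two constructions are mutually inverse, and that intertwiners match $k[G]$-module maps.
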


\begin{proof}
Suppose first that
\begin{eqnarray}
\rho:G \to GL(V)
\end{eqnarray}
is a linear representation of $G$ on $V$. Recall that the group algebra
$k[G]$ consists of all finite formal sums
\begin{eqnarray}
\sum_{g \in G} a_g g, \qquad a_g \in k,
\end{eqnarray}
with multiplication induced by the group multiplication in $G$ and extended
linearly.

Define an action of $k[G]$ on $V$ by
\begin{eqnarray}
\left(\sum_{g \in G} a_g g\right)\cdot v
=
\sum_{g \in G} a_g \rho(g)v,
\qquad v \in V.
\end{eqnarray}
This is well-defined and linear in both the coefficients and the vector.
Moreover, if
\begin{eqnarray}
\alpha = \sum_{g \in G} a_g g, \qquad
\beta = \sum_{h \in G} b_h h,
\end{eqnarray}
then for any $v\in V$,
\begin{eqnarray*}
(\alpha\beta)\cdot v
&=&
\left(\sum_{g,h \in G} a_g b_h (gh)\right)\cdot v \\
&=&
\sum_{g,h \in G} a_g b_h \rho(gh)v \\
&=&
\sum_{g,h \in G} a_g b_h \rho(g)\rho(h)v \\
&=&
\alpha \cdot (\beta \cdot v).
\end{eqnarray*}
Also, if $e$ is the identity element of $G$, then
\begin{eqnarray*}
e \cdot v = \rho(e)v = Iv = v.
\end{eqnarray*}
Hence $V$ is a left $k[G]$-module.

Conversely, suppose $V$ is a left $k[G]$-module. For each $g \in G$, define
\begin{eqnarray}
\rho(g):V \to V, \qquad \rho(g)v = g \cdot v.
\end{eqnarray}
Each $\rho(g)$ is linear, and for $g,h \in G$ and $v \in V$,
\begin{eqnarray*}
\rho(gh)v = (gh)\cdot v = g \cdot (h \cdot v) = \rho(g)\rho(h)v.
\end{eqnarray*}
Thus,
\begin{eqnarray*}
\rho(gh)=\rho(g)\rho(h), \qquad \rho(e)=I,
\end{eqnarray*}
and each $\rho(g)$ is invertible with inverse $\rho(g^{-1})$. Therefore,
\begin{eqnarray}
\rho:G \to GL(V)
\end{eqnarray}
is a linear representation of $G$ on $V$.

Thus, a linear representation of $G$ on $V$ is equivalent to a left
$k[G]$-module structure on $V$.
\end{proof}

Proposition \ref{prop_group_algebra_module} is a standard result in representation
theory \cite{dummit_foote}. It shows that a linear action of a group is not
merely a family of compatible transformations, but carries an intrinsic algebraic
structure. Once the action of $G$ is known, it extends linearly to the group
algebra $k[G]$, thereby endowing the state space with the structure of a module
over $k[G]$.

This becomes especially concrete when the group is generated by a single
element. In that case, the entire action is determined by one operator and its
powers, and the corresponding group algebra reduces to a polynomial-type ring
\cite{lang_algebra_book,dummit_foote}. For example, when $G = \mathbb{Z}$, one
has $k[G] \cong k[x,x^{-1}]$, and when $G = \mathbb{Z}/T\mathbb{Z}$, one has
$k[G] \cong k[x]/(x^T - 1)$
\cite{lang_algebra_book,dummit_foote}.

Thus, a linear system may be viewed not only as a dynamical system or as an
action of time, but also as a module over the algebra generated by its
evolution. This point of view is useful because different classes of linear
systems arise naturally by changing either the time group, the base field, or
both. In particular, standard linear systems over $\mathbb{R}$ or $\mathbb{C}$,
as well as linear systems over finite fields, appear as special cases of this
general framework.

\section{Systems Over Fields of Characteristic Zero}\label{sec_char_zero}

This section shows that the familiar structure theory of linear systems arises
naturally from the representation-theoretic formulation developed above. In
particular, decompositions of the state space into dynamically meaningful
components correspond to decompositions into invariant subrepresentations, and
the form of these decompositions is determined by the algebraic properties of
the underlying field.

\subsection{Algebraically Closed Fields}

Let the state space be $V = k^n$, where $k$ is an algebraically closed field
of characteristic zero, and consider the system
\begin{eqnarray}
x_{t+1} = A x_t, \qquad A \in GL_n(k).
\end{eqnarray}
This defines a map $\rho : \mathbb{Z} \to GL(V)$ by $\rho(t) = A^t$, and hence
a linear action of the time group on $V$.

\begin{proposition}
Let $k$ be an algebraically closed field of characteristic zero. Then the
representation $(V,\rho)$ decomposes as a direct sum of invariant
subrepresentations
\begin{eqnarray}
V = V_1 \oplus \cdots \oplus V_r,
\end{eqnarray}
where each \(V_i\) is the generalized eigenspace associated with a single
eigenvalue \(\lambda_i\) of \(A\).
\end{proposition}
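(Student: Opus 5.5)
The plan is to work through the module picture of Proposition \ref{prop_group_algebra_module}. The representation $\rho:\mathbb{Z}\to GL(V)$, $t\mapsto A^t$, makes $V$ a module over $k[\mathbb{Z}]\cong k[x,x^{-1}]$, with $x$ acting as $A$. A subspace is a subrepresentation exactly when it is invariant under $A$ and $A^{-1}$. Since $V$ is finite-dimensional, invariance under $A$ already gives invariance under $A^{-1}$: if $AW\subseteq W$ and $A$ is injective, then $A|_W$ is bijective on $W$. So it suffices to produce an $A$-invariant direct sum decomposition.

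First, factor the characteristic polynomial. Because $k$ is algebraically closed,
$$\chi_A(x)=\prod_{i=1}^r (x-\lambda_i)^{m_i}$$
with distinct $\lambda_i\in k$, and each $\lambda_i\neq 0$ since $A$ is invertible. Define
$$V_i=\ker (A-\lambda_i I)^{m_i},$$
which is the generalized eigenspace for $\lambda_i$. Each $V_i$ is $A$-invariant because $A$ commutes with $(A-\lambda_i I)^{m_i}$. By the remark above, each $V_i$ is then a subrepresentation.

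Second, prove $V=V_1\oplus\cdots\oplus V_r$ by the standard coprime-factor (Bézout) argument. Set $q_i(x)=\chi_A(x)/(x-\lambda_i)^{m_i}$. The $q_i$ have no common factor, so there are polynomials $p_i$ with $\sum_i p_i q_i=1$. Let $E_i=p_i(A)q_i(A)$. Then:
\begin{itemize}
\item $\sum_i E_i=I$.
\item By Cayley--Hamilton, $(A-\lambda_i I)^{m_i}E_i=p_i(A)\chi_A(A)=0$, so $\mathrm{im}\,E_i\subseteq V_i$. This gives $V=\sum_i V_i$.
\end{itemize}
For directness, suppose $\sum_i v_i=0$ with $v_i\in V_i$. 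For $j\neq i$, $q_j(A)$ contains the factor $(A-\lambda_i I)^{m_i}$, so $q_j(A)$ kills $V_i$, and hence $E_j v_i=0$. Therefore
$$v_i=\sum_j E_j v_i=E_i v_i=E_i\Big(\sum_l v_l\Big)=0.$$
Module-theoretically, this is the Chinese remainder decomposition of $k[x,x^{-1}]/(\chi_A)$, which is where algebraic closedness enters.

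The step that needs the most care is the direct-sum step, specifically handling Cayley--Hamilton and coprimality cleanly. I would note that characteristic zero is not actually used; only algebraic closedness is needed, to split $\chi_A$ into linear factors. Finally, I would remark that the summands are canonical, since they are the $(x-\lambda_i)$-primary components of the torsion $k[x,x^{-1}]$-module $V$. This makes the decomposition intrinsic, that is, independent of the choice of basis.
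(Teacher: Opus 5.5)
Your proof is correct and follows the paper's outline: the same subspaces (the paper uses $\ker(A-\lambda_i I)^n$ where you use exponent $m_i$), and the same remark that $A$ commutes with polynomials in $A$ to get invariance. You differ from the paper in two places where it is terse.

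\textbf{Directness of the sum.} The paper simply cites the Jordan decomposition theorem. You prove it directly with the B\'ezout idempotents $E_i=p_i(A)q_i(A)$ and Cayley--Hamilton. This is the primary-decomposition argument the paper uses in the next subsection (there with the minimal polynomial), except that the paper dismisses directness as ``standard'' and you carry it out.

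\textbf{Invariance under the whole representation.} The paper goes straight from $A$-invariance to $\rho$-invariance and never addresses invariance under $A^{-1}=\rho(-1)$. Your finite-dimensionality remark closes that gap.

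Your route is more elementary, since the Jordan form is usually proved through exactly this decomposition. It also makes clear that only the splitting of $\chi_A$ is used, not characteristic zero. The paper's version is shorter but leans on a heavier citation.

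One small point to tidy: you assert without proof that $\ker(A-\lambda_i I)^{m_i}$ is the generalized eigenspace. If the latter is defined as $\ker(A-\lambda_i I)^n$, the equality follows in one line from your decomposition. On each $V_j$ with $j\neq i$, the map $A-\lambda_i I$ is a nonzero scalar plus a nilpotent, hence invertible. So any $v\in\ker(A-\lambda_i I)^n$ has zero components outside $V_i$.
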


\begin{proof}
Since $k$ is algebraically closed, the characteristic polynomial of $A$
splits completely. Let $\lambda_1,\dots,\lambda_r$ be the distinct eigenvalues
of $A$. For each $\lambda_i$, define
\begin{eqnarray}
V_i = \ker\big((A - \lambda_i I)^n\big).
\end{eqnarray}
Each $V_i$ is invariant under $A$, since $A$ commutes with every polynomial in
$A$. Hence each $V_i$ is invariant under the action $\rho$.

By the Jordan decomposition theorem
\cite{serre_linear_representation_book},
\begin{eqnarray*}
V = V_1 \oplus \cdots \oplus V_r.
\end{eqnarray*}
Thus the representation decomposes into invariant subrepresentations.
\end{proof}

In particular, this shows that the familiar spectral decomposition of \(A\) is, intrinsically, a decomposition of the underlying time representation into invariant subrepresentations. Thus, over an algebraically closed field, the representation decomposes into
invariant subrepresentations corresponding to the generalized eigenspaces of
the evolution operator. In this setting, the classical spectral decomposition
of a linear system is precisely the decomposition of the time action into its
invariant components.

\subsection{Non-Algebraically Closed Fields}

When the base field $k$ is not algebraically closed, the structure changes
fundamentally. The operator $A$ may fail to admit eigenvalues in $k$, and the
system need not decompose into one-dimensional invariant subrepresentations.
Instead, the appropriate decomposition is determined by the factorization of the
minimal polynomial.

Let
\begin{eqnarray}
m_A(t) = p_1(t)^{e_1} \cdots p_r(t)^{e_r}
\end{eqnarray}
be the factorization of the minimal polynomial of $A$ into powers of distinct
irreducible polynomials over $k$ \cite{lang_algebra_book}. For each irreducible
factor $p_i(t)$, define
\begin{eqnarray}
V_i = \ker\big(p_i(A)^{e_i}\big).
\end{eqnarray}

\begin{proposition}
The representation $(V,\rho)$ decomposes as
\begin{eqnarray}
V = V_1 \oplus \cdots \oplus V_r,
\end{eqnarray}
where each $V_i$ is an invariant subrepresentation.
\end{proposition}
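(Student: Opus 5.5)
The plan is to prove the statement directly via the primary decomposition theorem, with the coprimality of the factors doing the work. Set $q_i(t) = p_i(t)^{e_i}$. Since the $p_i$ are distinct monic irreducibles over $k$, the $q_i$ are pairwise coprime in $k[t]$. Also set $f_i(t) = \prod_{j\neq i} q_j(t) = m_A(t)/q_i(t)$. The polynomials $f_1,\dots,f_r$ have no common irreducible factor, since $p_i$ divides every $f_j$ with $j\neq i$ but does not divide $f_i$. Because $k[t]$ is a principal ideal domain, Bézout's identity then gives polynomials $g_1,\dots,g_r \in k[t]$ with
\begin{eqnarray*}
\sum_{i=1}^r g_i(t) f_i(t) = 1 .
\end{eqnarray*}
Define $E_i = g_i(A) f_i(A)$. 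Substituting $A$ gives $\sum_i E_i = I$.

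\textbf{Step 1: $V = V_1 + \cdots + V_r$.} For any $v\in V$ we have $v=\sum_i E_i v$, so it suffices to show $E_i v \in V_i$. This holds because
\begin{eqnarray*}
q_i(A)E_i v = g_i(A)\, m_A(A)\, v = 0 .
\end{eqnarray*}

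\textbf{Step 2: the sum is direct.} Suppose $\sum_i v_i = 0$ with $v_i\in V_i$. Apply $E_j$ to this relation. For $i\neq j$, the factor $f_j$ contains $q_i$, so $E_j v_i = 0$. Hence $E_j v_j = 0$. On the other hand, for $i\neq j$ the operator $E_i$ kills $v_j$, so $v_j = \sum_i E_i v_j = E_j v_j = 0$.

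\textbf{Step 3: invariance.} Each $V_i$ is the kernel of a polynomial in $A$, and $A$ commutes with every polynomial in $A$. So $A V_i \subseteq V_i$. Since $A$ is invertible and $V_i$ is finite-dimensional, $A|_{V_i}$ is injective and hence bijective, so $A^{-1}V_i = V_i$ as well. Therefore $\rho(t)=A^t$ preserves $V_i$ for every $t\in\mathbb{Z}$, and each $V_i$ is an invariant subrepresentation.

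Equivalently, one could phrase the argument via Proposition~\ref{prop_group_algebra_module}: $V$ is a $k[x,x^{-1}]$-module annihilated by $m_A$, and the Chinese Remainder Theorem splits $k[x]/(m_A)$ into the product of the rings $k[x]/(q_i)$. The orthogonal idempotents of that product are exactly the $E_i$ above.

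The main obstacle is only the Bézout step, namely getting the coprimality of the $f_i$ right; everything else is formal. The one non-formal point elsewhere is that invariance must also hold under negative times, which Step 3 settles by finite-dimensionality.
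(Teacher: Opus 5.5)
Your proof is correct and follows the same primary-decomposition route as the paper: a B\'ezout identity in $k[t]$ evaluated at $A$ gives the decomposition, and commutation with $A$ gives invariance. It is in fact tighter than the paper's sketch: you apply B\'ezout to the cofactors $f_i=m_A/p_i^{e_i}$, which is exactly what forces $E_i v\in V_i$, whereas the paper's identity $\sum_i q_i p_i^{e_i}=1$ produces terms $q_i(A)p_i(A)^{e_i}v$ that need not lie in $V_i$; you also prove directness and invariance under $A^{-1}$ (hence under all of $\rho(\mathbb{Z})$) explicitly, where the paper only cites them.
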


\begin{proof}
Since $A$ commutes with every polynomial in $A$, it follows that if $v \in V_i$,
then
\begin{eqnarray}
p_i(A)^{e_i}(Av) = A\,p_i(A)^{e_i}v = 0,
\end{eqnarray}
and hence $Av \in V_i$. Thus each $V_i$ is invariant under $A$, and therefore
under the action $\rho$.

To establish the decomposition, note that the polynomials
$p_1(t)^{e_1},\dots,p_r(t)^{e_r}$ are pairwise coprime. Hence, by the Chinese
remainder theorem, there exist polynomials $q_1(t),\dots,q_r(t)$ such that
\begin{eqnarray}
q_1(t)p_1(t)^{e_1} + \cdots + q_r(t)p_r(t)^{e_r} = 1.
\end{eqnarray}
Evaluating at $A$ yields
\begin{eqnarray}
q_1(A)p_1(A)^{e_1} + \cdots + q_r(A)p_r(A)^{e_r} = I.
\end{eqnarray}
This implies that every $v \in V$ may be written as a sum of vectors in the
subspaces $V_i$, so that
\begin{eqnarray*}
V = V_1 + \cdots + V_r.
\end{eqnarray*}
The sum is direct by the standard primary decomposition argument. Hence
\begin{eqnarray}
V = V_1 \oplus \cdots \oplus V_r.
\end{eqnarray}
\end{proof}

Here each $V_i$ is a \emph{primary} invariant subrepresentation associated with a
single irreducible factor of the minimal polynomial. In general, these
subrepresentations do not split further over the base field. Thus, the failure
of complete spectral decomposition is not a property of the dynamics alone, but
a consequence of the algebraic properties of the field.

\subsection{Real and Complex Fields}

The preceding two subsections specialize to the familiar cases
$k=\mathbb{C}$ and $k=\mathbb{R}$. Over $\mathbb{C}$, which is algebraically
closed, the representation decomposes into invariant subrepresentations
corresponding to individual eigenvalues. Over $\mathbb{R}$, however, irreducible
quadratic factors may appear in the minimal polynomial, leading to
higher-dimensional invariant subrepresentations.

\begin{lemma}
Let $A \in GL_n(\mathbb{R})$. Then the representation $(\mathbb{R}^n,\rho)$
decomposes as
\begin{eqnarray}
\mathbb{R}^n = V_1 \oplus \cdots \oplus V_r,
\end{eqnarray}
where each $V_i$ corresponds either to a real linear factor or to an
irreducible quadratic factor of the minimal polynomial of $A$.
\end{lemma}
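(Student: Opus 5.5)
The plan is to obtain this lemma as a direct specialization of the preceding proposition on non-algebraically closed fields, applied with $k=\mathbb{R}$. First, factor the minimal polynomial $m_A(t)$ over $\mathbb{R}$ into powers of distinct monic irreducibles,
\begin{eqnarray*}
m_A(t) = p_1(t)^{e_1}\cdots p_r(t)^{e_r}.
\end{eqnarray*}
Then set $V_i=\ker\big(p_i(A)^{e_i}\big)$. The preceding proposition gives that each $V_i$ is $A$-invariant, hence an invariant subrepresentation of $\rho$, and that $\mathbb{R}^n = V_1\oplus\cdots\oplus V_r$.

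It then remains to identify the possible irreducible factors $p_i$. I would argue that every irreducible polynomial over $\mathbb{R}$ has degree one or two. Take an irreducible monic $p\in\mathbb{R}[t]$. By the fundamental theorem of algebra it has a root $\lambda\in\mathbb{C}$.
\begin{itemize}
\item If $\lambda\in\mathbb{R}$, then $(t-\lambda)$ divides $p$, so irreducibility forces $p=t-\lambda$.
\item If $\lambda\notin\mathbb{R}$, then since $p$ has real coefficients, $\bar\lambda$ is also a root. The real quadratic $q(t)=(t-\lambda)(t-\bar\lambda)=t^2-2\,\mathrm{Re}(\lambda)\,t+|\lambda|^2$ then divides $p$ in $\mathbb{C}[t]$. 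Division with remainder in $\mathbb{R}[t]$ shows the divisibility already holds in $\mathbb{R}[t]$, so $p=q$, which is irreducible over $\mathbb{R}$ because its discriminant is negative.
\end{itemize}
Hence each $V_i$ corresponds either to a real linear factor $t-\lambda_i$ (a real eigenvalue, with $V_i$ the real generalized eigenspace) or to an irreducible quadratic factor (a conjugate pair $\lambda_i,\bar\lambda_i$).

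Invertibility of $A$ only guarantees $\lambda_i\neq 0$, i.e.\ $t$ is not a factor. It plays no role in the decomposition beyond making $\rho$ a representation of $\mathbb{Z}$.

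The only step needing care is the classification of real irreducibles, specifically that the quadratic factor divides $p$ over $\mathbb{R}$ and not merely over $\mathbb{C}$. The remainder argument settles this: write $p = q s + r$ in $\mathbb{R}[t]$ with $\deg r<2$. Then $r$ vanishes at $\lambda$ and $\bar\lambda$, and since these are distinct, $r=0$.

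If desired, I would add an optional remark that $V_i$ for a quadratic factor is even-dimensional and is the real form of $W_\lambda\oplus W_{\bar\lambda}$, the sum of the complex generalized eigenspaces. This is obtained by complexifying and applying the algebraically closed proposition, but it is not needed for the statement.
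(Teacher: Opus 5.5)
Your proof is correct and follows essentially the same route as the paper: apply the primary decomposition proposition with $k=\mathbb{R}$, then use the fact that real irreducible polynomials have degree one or two. The only difference is that the paper cites this classification, while you prove it via the fundamental theorem of algebra and a division-with-remainder argument.
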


\begin{proof}
Over $\mathbb{R}$, the minimal polynomial of $A$ factors into real linear
factors and irreducible quadratic factors \cite{lang_algebra_book}. Applying
the primary decomposition theorem from the previous subsection yields a direct
sum decomposition of $\mathbb{R}^n$ into invariant subspaces corresponding to
these factors.
\end{proof}

In this decomposition, real linear factors correspond to invariant
subrepresentations associated with real eigenvalues, while irreducible quadratic
factors correspond to pairs of complex conjugate eigenvalues and give rise to
two-dimensional invariant real subspaces. These are the basic source of planar
rotational or oscillatory behavior in real linear systems, and the quadratic
case admits the familiar planar realization described below.

\begin{lemma}
Let $A \in GL_n(\mathbb{R})$, and suppose $a \pm ib$ is a pair of complex
conjugate eigenvalues of $A$ with $b \neq 0$. Then there exists a
two-dimensional invariant subrepresentation $V$ such that, with respect to a
suitable basis,
\begin{eqnarray}
A|_V =
\begin{bmatrix}
a & b \\
-b & a
\end{bmatrix}.
\end{eqnarray}
\end{lemma}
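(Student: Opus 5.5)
We will complexify. Since $a+ib$ is an eigenvalue of $A$ viewed as an operator on $\mathbb{C}^n$, we choose a nonzero eigenvector $w\in\mathbb{C}^n$ with $Aw=(a+ib)w$. We then write $w=u+iv$ with $u,v\in\mathbb{R}^n$. Because $A$ has real entries, comparing real and imaginary parts of $A(u+iv)=(a+ib)(u+iv)=(au-bv)+i(bu+av)$ gives
\begin{eqnarray*}
Au = au - bv, \qquad Av = bu + av .
\end{eqnarray*}
Hence $V=\mathrm{span}_{\mathbb{R}}\{u,v\}$ is invariant under $A$. Since $A$ is invertible, $V$ is also invariant under $A^{-1}$, and therefore under every $\rho(t)=A^t$, $t\in\mathbb{Z}$. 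So $V$ is a subrepresentation of $(\mathbb{R}^n,\rho)$.

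The step that needs an actual argument is showing $\dim V=2$, i.e.\ that $u$ and $v$ are linearly independent over $\mathbb{R}$. This is where we use $b\neq 0$. First, $\bar w=u-iv$ is an eigenvector for the eigenvalue $a-ib$, because $A$ is real. Since $b\neq0$, the eigenvalues $a\pm ib$ are distinct, so $w$ and $\bar w$ are linearly independent over $\mathbb{C}$. Now $u=(w+\bar w)/2$ and $v=(w-\bar w)/(2i)$ are obtained from $w,\bar w$ by an invertible change of basis, so $u,v$ span the same two-dimensional complex space. In particular they are $\mathbb{C}$-independent, and hence $\mathbb{R}$-independent. One can also argue directly: if $v=cu$ with $c\in\mathbb{R}$, then $u$ would be a real eigenvector whose eigenvalue is non-real, which is impossible. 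Both cases $u=0$ and $v=0$ are covered similarly, since $w\ne0$ forces at least one of them to be nonzero.

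Finally, we read off the matrix of $A|_V$. Using the convention that coordinates are taken with respect to the ordered basis $(u,v)$, the relations $Au=au-bv$ and $Av=bu+av$ give the matrix $\begin{bmatrix} a & b\\ -b & a\end{bmatrix}$, exactly as in the statement. This holds under the convention that the rows record the coefficients of $Au$ and $Av$. If the column convention is used instead, one gets the transpose. In that case we fix it by using the basis $(u,-v)$, or equivalently by starting from the eigenvector $\bar w$ for $a-ib$; both are harmless sign choices. We can also note that $V$ lies in the primary component of the preceding lemma associated with the quadratic factor $t^2-2at+a^2+b^2$. This links the result to the earlier decomposition, although it is not needed for the proof.
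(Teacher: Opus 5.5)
Your proof is correct and follows the paper's argument: complexify, take an eigenvector $u+iv$ for $a+ib$, and equate real and imaginary parts to get $Au=au-bv$, $Av=bu+av$. You also justify two points the paper only asserts: why $b\neq 0$ forces $u,v$ to be independent (via the independence of $w$ and $\bar w$), and why $V$ is invariant under $A^{-1}$ and hence under all of $\rho(\mathbb{Z})$.

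One small correction: your convention remark is reversed. Under the standard convention, where the $j$-th column holds the coordinates of the image of the $j$-th basis vector, the basis $(u,v)$ already gives $\begin{bmatrix} a & b\\ -b & a\end{bmatrix}$. It is the row convention, or the basis $(u,-v)$ under the column convention, that gives the transpose. The existence of a suitable basis is unaffected.
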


\begin{proof}
Let $v \in \mathbb{C}^n$ be an eigenvector corresponding to the eigenvalue
$a+ib$, and write $v = u + iw$ with $u,w \in \mathbb{R}^n$. Then
\begin{eqnarray}
Av = (a+ib)v.
\end{eqnarray}
Equating real and imaginary parts gives
\begin{eqnarray}
Au = au - bw, \qquad Aw = bu + aw.
\end{eqnarray}
Thus $\mathrm{span}\{u,w\}$ is invariant under $A$, and hence under the action
$\rho$. Since $b \neq 0$, the vectors $u$ and $w$ are linearly independent, so
this subspace is two-dimensional.
\end{proof}

Thus, over $\mathbb{R}$, linear systems decompose into two fundamental types of
invariant subrepresentations: those associated with real eigenvalues, and those
associated with complex conjugate pairs. From the representation-theoretic point
of view, the familiar block structure of real linear systems is therefore a
direct consequence of the factorization properties of the base field.

\section{Linear Systems on Finite Fields $\mathbb{F}_p$}\label{sec_finite_fields}

\subsection{Systems Over Fields of Characteristic Zero vs Finite Fields}

Linear systems over $\mathbb{R}$ or $\mathbb{C}$ are typically understood
through their spectral properties. In particular, for a system
$x_{t+1} = A x_t$, the behavior of trajectories is described in terms of the
eigenvalues of $A$, which determine growth, decay, and asymptotic behavior.
When the matrix is diagonalizable, the system decomposes into independent modes,
and more generally, the Jordan form provides a complete description in terms of
generalized eigenvectors.

This framework relies on two key features of the underlying field. First, one
may pass to an algebraic closure in which the characteristic polynomial splits,
ensuring the existence of eigenvalues and invariant subspaces. Second, the field
carries a topology that allows eigenvalues to be interpreted in terms of
magnitude, thereby giving meaning to growth and decay.

Over a finite field \(\mathbb{F}_p\), however, the classical spectral viewpoint
is no longer structurally decisive. Although one may pass to the algebraic
closure $\overline{\mathbb{F}}_p$, in which eigenvalues always exist, this does
not recover the classical picture. The essential issue is that finite fields
carry no notion of magnitude or topology, so eigenvalues do not encode
asymptotic behavior.

This can be seen even in simple examples. A matrix in $GL_2(\mathbb{F}_p)$ with
irreducible quadratic characteristic polynomial admits no eigenvalues in
$\mathbb{F}_p$, and hence no one-dimensional invariant subspaces over the base
field. Passing to an extension field produces eigenvalues, but does not yield a
more meaningful description of the dynamics.

More fundamentally, the state space $\mathbb{F}_p^n$ is finite. As a consequence,
all trajectories are necessarily periodic, and any eigenvalue (in an algebraic
closure) must be a root of unity. Thus, the range of possible behavior is
severely restricted, and eigenvalues cease to play the same organizing role that
they do over $\mathbb{R}$ or $\mathbb{C}$.

These observations suggest that, over finite fields, the natural structure of a
linear system should be understood not primarily through spectral decomposition,
but through the algebraic structure of its time evolution.

\subsection{Finite Cyclic Structure and Emergence of Representation}

We return to the system
\begin{eqnarray}
x_{t+1} = A x_t, \quad A \in GL_n(\mathbb{F}_p),
\end{eqnarray}
and examine its behavior through iteration. For an initial condition $x_0$, one
has $x_t = A^t x_0$, so that the system is governed by the sequence
\begin{eqnarray}
I, A, A^2, A^3, \dots.
\end{eqnarray}

A fundamental structural property now follows from finiteness.

\begin{lemma}
Let $A \in GL_n(\mathbb{F}_p)$. Then there exists a positive integer $T$ such
that
\begin{eqnarray}
A^T = I.
\end{eqnarray}
Consequently,
\begin{eqnarray}
A^{t+T} = A^t, \qquad x_{t+T} = x_t.
\end{eqnarray}
\end{lemma}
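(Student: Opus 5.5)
The plan is to use finiteness of the group $GL_n(\mathbb{F}_p)$ together with invertibility of $A$. The state space $\mathbb{F}_p^n$ has $p^n$ elements, so the set $M_n(\mathbb{F}_p)$ is finite, of cardinality $p^{n^2}$. In particular the group $GL_n(\mathbb{F}_p)$ is finite. Hence the sequence $I, A, A^2, A^3, \dots$ takes only finitely many values. By the pigeonhole principle there exist integers $0 \le s < r$ with $A^r = A^s$.

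The key step is to convert this coincidence into $A^T = I$. Since $A \in GL_n(\mathbb{F}_p)$, we may multiply both sides of $A^r = A^s$ by $A^{-s}$, obtaining $A^{r-s} = I$. Set $T = r-s$, which is a positive integer.

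Alternatively, one can invoke Lagrange's theorem: the cyclic subgroup $\langle A\rangle \subseteq GL_n(\mathbb{F}_p)$ has finite order $T$, which divides $|GL_n(\mathbb{F}_p)| = \prod_{i=0}^{n-1}(p^n-p^i)$, and then $A^T = I$. Either route works, and I would present the pigeonhole argument as the primary one since it is self-contained.

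The consequences follow directly from the homomorphism property in Lemma \ref{lemma_rep_def}, which holds verbatim over $\mathbb{F}_p$. We get $A^{t+T} = A^t A^T = A^t$ for every $t$, including negative $t$ because $A$ is invertible. Then $x_{t+T} = A^{t+T}x_0 = A^t x_0 = x_t$.

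The only point requiring care is the cancellation step, where invertibility is genuinely needed. For singular $A$ one only obtains eventual periodicity $A^{r} = A^{s}$ with $s>0$, as in the semigroup setting of the earlier remark. I would note this briefly so it is clear why the hypothesis $A\in GL_n(\mathbb{F}_p)$ is essential. Finally, I would observe that taking $T$ minimal makes $\rho$ factor through $\mathbb{Z}/T\mathbb{Z}$, which motivates the cyclic-group viewpoint that follows.
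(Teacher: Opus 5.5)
Your proof is correct and follows the paper's argument: finiteness of $GL_n(\mathbb{F}_p)$ forces a repetition $A^r = A^s$ with $s<r$, and cancelling $A^{s}$ gives $A^{r-s}=I$. You spell out the use of invertibility in the cancellation step, which the paper leaves implicit, and the Lagrange route is a valid alternative.
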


\begin{proof}
Since $GL_n(\mathbb{F}_p)$ is finite, the sequence $\{A^t\}$ must contain
repetitions, so $A^{t_1} = A^{t_2}$ for some $t_1 < t_2$. Hence
$A^{t_2-t_1}=I$, and the result follows.
\end{proof}

Thus, time evolution is inherently periodic, and the natural time index is no
longer $\mathbb{Z}$ itself, but time modulo the period $T$. In other words, the
dynamics are naturally organized by the finite cyclic group
$\mathbb{Z}/T\mathbb{Z}$. Accordingly, the original map
\begin{eqnarray}
\tilde{\rho}:\mathbb{Z} \to GL_n(\mathbb{F}_p), \qquad t \mapsto A^t,
\end{eqnarray}
factors through the quotient $\mathbb{Z}/T\mathbb{Z}$.

\begin{lemma}
Let $A \in GL_n(\mathbb{F}_p)$ satisfy $A^T = I$. Then the map
\begin{eqnarray}
\tilde{\rho}:\mathbb{Z} \to GL_n(\mathbb{F}_p), \qquad t \mapsto A^t,
\end{eqnarray}
factors through the quotient $\mathbb{Z}/T\mathbb{Z}$. That is, there exists a
well-defined homomorphism
\begin{eqnarray}
\rho:\mathbb{Z}/T\mathbb{Z} \to GL_n(\mathbb{F}_p)
\end{eqnarray}
such that the following diagram commutes:
\begin{equation}
\begin{tikzcd}[column sep=large, row sep=large]
\mathbb{Z} \arrow[r, "\tilde{\rho}"] \arrow[d,"\pi"] & GL_n(\mathbb{F}_p) \\
\mathbb{Z}/T\mathbb{Z} \arrow[ur, "\rho"']
\end{tikzcd}
\end{equation}
where $\pi$ is the natural projection.
\end{lemma}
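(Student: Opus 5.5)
The plan is to verify the universal property of the quotient directly, using the fact that $\tilde{\rho}$ is a group homomorphism by Lemma \ref{lemma_rep_def}, whose proof is purely algebraic and works verbatim over $\mathbb{F}_p$. The key steps are: identify the kernel condition, define $\rho$ on cosets, check well-definedness, check the homomorphism property, and check commutativity of the diagram.

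First I will show that $T\mathbb{Z} \subseteq \ker\tilde{\rho}$. For any $m \in \mathbb{Z}$,
\begin{eqnarray*}
\tilde{\rho}(mT) = A^{mT} = (A^T)^m = I^m = I .
\end{eqnarray*}
Negative $m$ are allowed here, since $A$ is invertible and $(A^T)^{-1} = I$.

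Next I will define
\begin{eqnarray*}
\rho(t + T\mathbb{Z}) := A^t .
\end{eqnarray*}
Well-definedness is the only point needing care, and I expect it to be the main (though mild) obstacle. If $t_1 + T\mathbb{Z} = t_2 + T\mathbb{Z}$, then $t_2 = t_1 + mT$ for some $m \in \mathbb{Z}$. By Lemma \ref{lemma_rep_def} and the first step,
\begin{eqnarray*}
A^{t_2} = A^{t_1}A^{mT} = A^{t_1},
\end{eqnarray*}
so the value of $\rho$ does not depend on the choice of representative.

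For the homomorphism property, recall that addition in $\mathbb{Z}/T\mathbb{Z}$ is $(t_1+T\mathbb{Z})+(t_2+T\mathbb{Z}) = (t_1+t_2)+T\mathbb{Z}$. Using this with Lemma \ref{lemma_rep_def},
\begin{eqnarray*}
\rho\big((t_1+t_2)+T\mathbb{Z}\big) = A^{t_1+t_2} = A^{t_1}A^{t_2} = \rho(t_1+T\mathbb{Z})\,\rho(t_2+T\mathbb{Z}).
\end{eqnarray*}
In addition, $\rho(0+T\mathbb{Z}) = I$, so $\rho$ is a homomorphism into $GL_n(\mathbb{F}_p)$.

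Commutativity of the diagram is immediate from the definition: $\rho(\pi(t)) = \rho(t+T\mathbb{Z}) = A^t = \tilde{\rho}(t)$ for every $t \in \mathbb{Z}$.

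Finally, I will note that $\rho$ is the unique such map, because $\pi$ is surjective. Alternatively, the whole argument can be obtained in one line from the first isomorphism theorem, since $T\mathbb{Z} \subseteq \ker\tilde{\rho}$.
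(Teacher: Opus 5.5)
Your proposal is correct and follows essentially the same route as the paper: you show that $A^{t}$ depends only on the class of $t$ modulo $T$ via $A^{t+mT}=A^{t}(A^{T})^{m}=A^{t}$, define $\rho([t])=A^{t}$, and obtain the homomorphism property from $A^{t_1+t_2}=A^{t_1}A^{t_2}$. Your explicit checks of the diagram's commutativity and of uniqueness (from surjectivity of $\pi$) are small additions that the paper leaves implicit.
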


\begin{proof}
If $t_1 \equiv t_2 \pmod{T}$, then $t_1=t_2+kT$ for some $k\in\mathbb{Z}$, and
hence
\begin{eqnarray*}
A^{t_1}=A^{t_2+kT}=A^{t_2}(A^T)^k=A^{t_2}.
\end{eqnarray*}
Thus $\tilde{\rho}$ depends only on the class of $t$ modulo $T$, so it defines
a well-defined map $\rho([t])=A^t$. The homomorphism property follows
immediately from
\begin{eqnarray*}
\rho([t_1+t_2]) = A^{t_1+t_2} = A^{t_1}A^{t_2} = \rho([t_1])\rho([t_2]).
\end{eqnarray*}
\end{proof}

Thus, starting from iteration alone, the system naturally induces an action of
the finite cyclic group $\mathbb{Z}/T\mathbb{Z}$ on $\mathbb{F}_p^n$. In this sense, the finite cyclic time group is not imposed externally, but emerges canonically from the algebra of the base field. 

Now, the evolution of an initial condition $x_0$ may be written as
\begin{eqnarray}
x_t = \rho(t)x_0,
\end{eqnarray}
and its trajectory is precisely the orbit of $x_0$ under this action.

\begin{proposition}
For any initial condition $x_0 \in \mathbb{F}_p^n$, the trajectory
\begin{eqnarray}
\{x_t : t \in \mathbb{Z}\}
\end{eqnarray}
coincides with the orbit
\begin{eqnarray}
\{\rho(t)x_0 : t \in \mathbb{Z}/T\mathbb{Z}\}.
\end{eqnarray}
Moreover, the state space $\mathbb{F}_p^n$ decomposes as a disjoint union of
such trajectories.
\end{proposition}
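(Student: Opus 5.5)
The plan is to use the factorization lemma just proved: $\tilde{\rho} = \rho \circ \pi$, where $\pi:\mathbb{Z}\to\mathbb{Z}/T\mathbb{Z}$ is the natural projection. The trajectory statement then follows from surjectivity of $\pi$, and the disjoint-union statement from the fact that orbits of a group action partition the set they act on.

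First, the equality of trajectory and orbit. For any $t\in\mathbb{Z}$ we have $x_t = A^t x_0 = \tilde{\rho}(t)x_0 = \rho(\pi(t))x_0$, so every trajectory point lies in the orbit $\{\rho(s)x_0 : s\in\mathbb{Z}/T\mathbb{Z}\}$. Conversely, given $s\in\mathbb{Z}/T\mathbb{Z}$, we pick a representative $t$ with $\pi(t)=s$; then $\rho(s)x_0 = A^t x_0 = x_t$. Hence the two sets coincide. The lemma on $A^T=I$ guarantees that negative $t$ cause no issue, since $A^{-1}=A^{T-1}$.

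Second, the decomposition. We define $x\sim y$ if $y=\rho(s)x$ for some $s\in\mathbb{Z}/T\mathbb{Z}$. This relation is:
\begin{itemize}
\item reflexive, because $\rho([0])=I$;
\item symmetric, because $\rho(s)^{-1}=\rho(-s)$ by the homomorphism property;
\item transitive, because $\rho(s_1)\rho(s_2)=\rho(s_1+s_2)$.
\end{itemize}
The equivalence classes are exactly the orbits, which by the first part are the trajectories. Equivalence classes partition a set, so $\mathbb{F}_p^n$ is the disjoint union of the distinct trajectories. In particular, two trajectories that share a point are identical.

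I do not expect a real obstacle. The only point needing care is the phrase ``disjoint union of such trajectories'': it must be read as a union over distinct orbits, choosing one representative per class, and not as a union indexed by all $x_0$. I would state this explicitly and note that the fixed point $0$ forms its own singleton orbit.
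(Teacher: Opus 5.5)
Your proposal is correct and follows essentially the same route as the paper: the trajectory equals the orbit because $x_t = A^t x_0 = \rho(\pi(t))x_0$ with $\pi$ surjective, and the disjoint-union claim holds because orbits of a group action partition the set. You spell out the surjectivity of $\pi$ and the equivalence-relation check that the paper only cites as standard, and your reading of the union as ranging over distinct orbits is the right one.
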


\begin{proof}
The first statement follows directly from the definition
$x_t=A^t x_0=\rho(t)x_0$. The second follows because the orbits of a group
action form a partition of the underlying set.
\end{proof}

Note that the decomposition into orbits is a geometric consequence of the
induced group action, and should be distinguished from the linear decomposition
of the state space that will arise later from the module-theoretic viewpoint.

\begin{remark}
Although one has an isomorphism
\begin{eqnarray}
\mathbb{Z}/T\mathbb{Z} \cong \mathbb{Z}/p_1^{k_1}\mathbb{Z} \times \cdots \times \mathbb{Z}/p_r^{k_r}\mathbb{Z},
\end{eqnarray}
this does not imply that the dynamics decompose into independent components.
The system is generated by a single element $A$, and hence evolves along a
single cyclic time direction. The above decomposition reflects the arithmetic
structure of the time group, but does not induce a decomposition of the
dynamics.
\end{remark}

\section{Algebraic Structure of Linear Systems over Finite Fields}\label{sec_algebraic_structure}

The representation-theoretic viewpoint implies that the dynamics of the system
are generated by a single linear operator, namely the state transition matrix
$A$. This naturally leads to an algebraic extension of the dynamics by
considering not only powers of $A$, but also polynomial expressions in $A$.

\subsection{Polynomial Action and Induced Algebraic Structure}

For any polynomial
\begin{eqnarray}
p(x) = c_0 + c_1 x + \cdots + c_k x^k \in \mathbb{F}_p[x],
\end{eqnarray}
we define the associated operator
\begin{eqnarray}
p(A) = c_0 I + c_1 A + \cdots + c_k A^k.
\end{eqnarray}
Thus, each polynomial gives rise to a linear operator via the substitution
$x \mapsto A$.

This construction is compatible with the algebraic structure of
$\mathbb{F}_p[x]$, as formalized in the following result.

\begin{proposition}
The map
\begin{eqnarray}
\Phi : \mathbb{F}_p[x] \to \mathrm{End}(\mathbb{F}_p^n), \quad
p(x) \mapsto p(A),
\end{eqnarray}
is a homomorphism of $\mathbb{F}_p$-algebras.
\end{proposition}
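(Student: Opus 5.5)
To prove the proposition, I will check directly that the evaluation map $\Phi$ respects each piece of the $\mathbb{F}_p$-algebra structure. That means showing $\Phi$ is $\mathbb{F}_p$-linear, sends $1$ to $I$, and is multiplicative.

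\textbf{Linearity and unit.} Write $p(x)=\sum_i c_i x^i$ and $q(x)=\sum_i d_i x^i$, padding with zero coefficients so both sums run over the same finite range. For $\lambda\in\mathbb{F}_p$ we have $(p+\lambda q)(x)=\sum_i (c_i+\lambda d_i)x^i$. Since matrix addition and scalar multiplication distribute over the finite sums $\sum_i (\cdot) A^i$, this gives
\[
\Phi(p+\lambda q)=\sum_i (c_i+\lambda d_i)A^i=\Phi(p)+\lambda\Phi(q).
\]
The constant polynomial $1$ is sent to $A^0=I$, so $\Phi$ is unital.

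\textbf{Multiplicativity.} This is the only step with real content, and it uses that $A$ commutes with itself. First check monomials: $\Phi(x^i x^j)=A^{i+j}=A^iA^j$. This is just the semigroup property of powers, the same computation as in Lemma~\ref{lemma_rep_def} but restricted to $\mathbb{N}$, so invertibility is not needed. For general polynomials, the product is
\[
pq=\sum_{m}\Big(\sum_{i+j=m} c_i d_j\Big)x^m .
\]
Applying $\Phi$ and using linearity gives $\sum_m\sum_{i+j=m} c_i d_j A^{i+j}$. Expanding $\Phi(p)\Phi(q)=\big(\sum_i c_iA^i\big)\big(\sum_j d_jA^j\big)$ by bilinearity of matrix multiplication, and noting that scalars in $\mathbb{F}_p$ commute with matrices, gives $\sum_{i,j}c_id_jA^iA^j$. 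Regrouping by $m=i+j$ shows the two expressions agree.

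The main point needing care is the identity $A^iA^j=A^{i+j}$, since it is exactly what makes the polynomial substitution well behaved. Because all the polynomials involved are in the single variable $A$, no noncommutativity issue arises and the regrouping is routine.

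Combining the two parts, $\Phi$ is an $\mathbb{F}_p$-algebra homomorphism. As a closing remark, its image $\mathbb{F}_p[A]$ is commutative, and by Cayley--Hamilton its kernel contains the characteristic polynomial.
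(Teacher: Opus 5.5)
Your proof is correct and takes the same approach as the paper: check additivity, scalar compatibility and multiplicativity, with multiplicativity reducing to $A^iA^j=A^{i+j}$. You are slightly more complete than the paper, since you also verify $\Phi(1)=I$ and spell out the Cauchy-product regrouping that the paper leaves implicit.
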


\begin{proof}
For any $p(x), q(x) \in \mathbb{F}_p[x]$ and $\alpha \in \mathbb{F}_p$, we have
\begin{eqnarray*}
\Phi(p + q) &=& p(A) + q(A), \qquad \Phi(\alpha p) = \alpha p(A), \\
\Phi(pq) &=& p(A)q(A),
\end{eqnarray*}
where the last identity follows from $A^i A^j = A^{i+j}$. Hence $\Phi$
preserves addition, scalar multiplication, and multiplication, and is therefore
an algebra homomorphism.
\end{proof}

Thus, the polynomial ring $\mathbb{F}_p[x]$ acts on the state space via the
evaluation map $x \mapsto A$, endowing $\mathbb{F}_p^n$ with the structure of
an $\mathbb{F}_p[x]$-module.

From this perspective, the dynamics are no longer described solely in terms of
iteration of $A$, but through the action of the polynomial algebra generated by
the time-shift operator.

\subsection{Equivalence with Module Structure over a Polynomial Ring}

In the previous section, we showed that a linear system over
$\mathbb{F}_p$ with $A^T = I$ naturally defines a representation
\begin{eqnarray}
\rho : \mathbb{Z}/T\mathbb{Z} \to GL(\mathbb{F}_p^n),
\end{eqnarray}
given by $\rho(t) = A^t$.

By Proposition \ref{prop_group_algebra_module}, this representation is
equivalent to endowing the state space $\mathbb{F}_p^n$ with the structure
of a module over the group algebra $\mathbb{F}_p[\mathbb{Z}/T\mathbb{Z}]$.

The following theorem makes this structure explicit.

\begin{theorem}
Let $A \in GL_n(\mathbb{F}_p)$ satisfy $A^T = I$, and consider the linear
system $x_{t+1} = Ax_t$. Then the state space $\mathbb{F}_p^n$ is naturally
a module over the ring
\begin{eqnarray}
\mathbb{F}_p[x]/(x^T - 1),
\end{eqnarray}
where the action is given by
\begin{eqnarray}
x \cdot v = Av,
\end{eqnarray}
and extended linearly to all of $\mathbb{F}_p[x]/(x^T - 1)$.

Conversely, any module over $\mathbb{F}_p[x]/(x^T - 1)$ defines a linear
system over $\mathbb{F}_p$ whose evolution operator satisfies $A^T = I$.
\end{theorem}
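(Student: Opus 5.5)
The plan is to build on the evaluation homomorphism $\Phi:\mathbb{F}_p[x]\to\mathrm{End}(\mathbb{F}_p^n)$, $p(x)\mapsto p(A)$, from the preceding proposition, and to pass to the quotient by checking that $\Phi$ kills the ideal $(x^T-1)$. Since $\Phi(x^T-1)=A^T-I=0$ and $\ker\Phi$ is an ideal, we get $(x^T-1)\subseteq\ker\Phi$. By the universal property of quotient rings, $\Phi$ then factors through a well-defined $\mathbb{F}_p$-algebra homomorphism
\begin{eqnarray*}
\bar\Phi:\mathbb{F}_p[x]/(x^T-1)\to\mathrm{End}(\mathbb{F}_p^n),\qquad \bar\Phi([p])=p(A).
\end{eqnarray*}
An algebra homomorphism from a ring $R$ into $\mathrm{End}(V)$ is exactly a left $R$-module structure on $V$. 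So setting $[p]\cdot v=p(A)v$ gives the module structure, and by construction $x\cdot v=Av$. Its restriction to $\mathbb{F}_p$ recovers the original vector space structure. This is what ``extended linearly'' means, and it also shows the module structure is uniquely determined by $x\cdot v=Av$, since $[x]$ generates the quotient as an $\mathbb{F}_p$-algebra.

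As a consistency check, I would relate this to Proposition~\ref{prop_group_algebra_module}. The identification $\mathbb{F}_p[\mathbb{Z}/T\mathbb{Z}]\cong\mathbb{F}_p[x]/(x^T-1)$ sends the generator $[1]$ to $[x]$. Under it, the module constructed above agrees with the group-algebra module coming from $\rho([t])=A^t$. This is only a remark, though; the direct argument above suffices.

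For the converse, let $M$ be a module over $R=\mathbb{F}_p[x]/(x^T-1)$.
\begin{enumerate}
\item Since $\mathbb{F}_p\subseteq R$ via constants, $M$ is an $\mathbb{F}_p$-vector space. Assume $M$ is finite-dimensional, as is implicit in the phrase ``linear system over $\mathbb{F}_p$'', and identify $M\cong\mathbb{F}_p^n$ after choosing a basis.
\item Define $A v = [x]\cdot v$. The module axioms, together with the centrality of the constants $\mathbb{F}_p$ in $R$, make $A$ $\mathbb{F}_p$-linear.
\item From $[x]^T=[1]$ in $R$ we obtain $A^T v=[x]^T\cdot v=v$, so $A^T=I$. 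In particular $A$ is invertible with inverse $A^{T-1}$, so $A\in GL_n(\mathbb{F}_p)$.
\item Thus $x_{t+1}=Ax_t$ is a linear system with $A^T=I$.
\end{enumerate}
The two constructions are mutually inverse: starting from $A$, the module gives back $[x]\cdot v=Av$; starting from a module, the rebuilt action agrees on $[x]$ and hence on all of $R$ by linearity and multiplicativity.

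The main obstacle is not computational. It lies in stating the converse carefully: one must insist on finite dimensionality and note that the identification with $\mathbb{F}_p^n$ depends on a choice of basis, so $A$ is determined only up to similarity. I would handle this by phrasing the converse intrinsically, as $A\in GL(M)$, and then remarking that a basis choice yields a matrix realization. This fits the paper's coordinate-free viewpoint.
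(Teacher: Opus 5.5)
Your proof is correct. The converse follows the paper, but the forward direction takes a different route. The paper argues indirectly: it invokes the cited isomorphism $\mathbb{F}_p[\mathbb{Z}/T\mathbb{Z}]\cong\mathbb{F}_p[x]/(x^T-1)$ and transports the group-algebra module structure given by Proposition~\ref{prop_group_algebra_module} for $\rho([t])=A^t$. This makes the module visibly the algebraic form of the time representation, which is the paper's thesis, and the same template handles $\mathbb{Z}$ with $k[x,x^{-1}]$. You instead factor the evaluation homomorphism $\Phi$ through the quotient after checking $\Phi(x^T-1)=A^T-I=0$. Your route has several advantages:
\begin{itemize}
\item It is more elementary and self-contained, using only the preceding proposition and the universal property of quotients. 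It is essentially the argument one would use to prove the cited group-algebra isomorphism itself.
\item It shows explicitly why ``extended linearly'' is well defined and unique.
\item It carries over verbatim to any polynomial $f$ annihilating $A$, giving an $\mathbb{F}_p[x]/(f)$-module structure, for example for the minimal polynomial. This ties directly to the invariant-factor proposition that follows.
\end{itemize}
What your route does not exhibit on its own is the link to the time group $\mathbb{Z}/T\mathbb{Z}$, and you rightly relegate that to a remark. Your converse matches the paper's: define $Av:=x\cdot v$ and get $A^T=I$ from $x^T=1$. You are more careful than the paper in four respects: you verify $\mathbb{F}_p$-linearity of $A$, deduce invertibility via $A^{-1}=A^{T-1}$, make explicit the finite-dimensionality that the paper builds in by taking $V=\mathbb{F}_p^n$, and check that the two constructions are mutually inverse.
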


\begin{proof}
The group algebra of the finite cyclic group
$\mathbb{Z}/T\mathbb{Z}$ over $\mathbb{F}_p$ is isomorphic to the quotient ring
\begin{eqnarray}
\mathbb{F}_p[\mathbb{Z}/T\mathbb{Z}] \cong \mathbb{F}_p[x]/(x^T - 1),
\end{eqnarray}
where the generator of $\mathbb{Z}/T\mathbb{Z}$ corresponds to the indeterminate
$x$ \cite{dummit_foote}.

Thus, by Proposition \ref{prop_group_algebra_module}, giving a representation
$\rho : \mathbb{Z}/T\mathbb{Z} \to GL(\mathbb{F}_p^n)$ is equivalent to giving
$\mathbb{F}_p^n$ the structure of a module over
$\mathbb{F}_p[x]/(x^T - 1)$.

Under this identification, the action of $x$ corresponds to the operator
$A = \rho(1)$, and hence $x \cdot v = Av$.

Conversely, suppose $V = \mathbb{F}_p^n$ is a module over
$\mathbb{F}_p[x]/(x^T - 1)$. Define a linear operator $A : V \to V$ by
\begin{eqnarray}
A v := x \cdot v.
\end{eqnarray}
Since $x^T = 1$ in $\mathbb{F}_p[x]/(x^T - 1)$, it follows that
\begin{eqnarray}
A^T v = x^T \cdot v = v
\end{eqnarray}
for all $v \in V$, and hence $A^T = I$.

Now define $\rho : \mathbb{Z}/T\mathbb{Z} \to GL(V)$ by
\begin{eqnarray}
\rho(t) = A^t.
\end{eqnarray}
Then $\rho$ is a group homomorphism, and the corresponding linear system
$x_{t+1} = Ax_t$ recovers the original module action.
\end{proof}

Thus, a linear system over $\mathbb{F}_p$ is equivalently a module over the
ring $\mathbb{F}_p[x]/(x^T - 1)$, where the indeterminate $x$ acts as the
time-shift operator. This shows that the algebraic structure of the system is
completely determined by the representation of the finite cyclic time group. In this sense, the module structure is not an auxiliary construction, but the natural algebraic form of the underlying time representation.

\subsection{Module-Theoretic Interpretation of System Structure}

The module structure provides a systematic framework for analyzing the system.
In particular, invariant subspaces of the system correspond precisely to
submodules of $\mathbb{F}_p^n$, and decomposition of the state space into
independent subsystems corresponds to decomposition into submodules.

Moreover, since the action is generated by a single linear operator \(A\), the
state space may be viewed as a finitely generated \(\mathbb{F}_p[x]\)-module,
where \(x\) acts by \(A\). Since \(\mathbb{F}_p[x]\) is a principal ideal domain,
the structure theorem for finitely generated modules applies.

\begin{proposition}
Let $V = \mathbb{F}_p^n$ be the state space of the system
$x_{t+1} = Ax_t$. Then, viewed as an $\mathbb{F}_p[x]$-module via the action
$x \cdot v = Av$, one has a decomposition
\begin{eqnarray}
V \cong \bigoplus_{i=1}^r \mathbb{F}_p[x]/(f_i(x)),
\end{eqnarray}
where each $f_i(x)$ is a monic polynomial and
\begin{eqnarray}
f_1(x)\mid f_2(x)\mid \cdots \mid f_r(x).
\end{eqnarray}
If, in addition, $A^T = I$, then each $f_i(x)$ divides $x^T - 1$.
\end{proposition}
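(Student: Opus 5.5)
The plan is to reduce the statement to the structure theorem for finitely generated modules over a principal ideal domain, in its invariant factor form. First I check the hypotheses. By the preceding proposition, the evaluation map $\Phi:\mathbb{F}_p[x]\to\mathrm{End}(\mathbb{F}_p^n)$, $p(x)\mapsto p(A)$, is an algebra homomorphism. Hence $p(x)\cdot v := p(A)v$ makes $V=\mathbb{F}_p^n$ an $\mathbb{F}_p[x]$-module. It is finitely generated, since any $\mathbb{F}_p$-basis of $V$ already generates it over the larger ring $\mathbb{F}_p[x]\supseteq \mathbb{F}_p$. Finally, $\mathbb{F}_p[x]$ is a PID, because it is a polynomial ring in one variable over a field and so has a Euclidean algorithm.

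Next I apply the invariant factor form of the structure theorem. This gives
\begin{eqnarray*}
V \cong \mathbb{F}_p[x]^s \oplus \bigoplus_{i=1}^r \mathbb{F}_p[x]/(f_i(x)), \qquad f_1\mid\cdots\mid f_r,
\end{eqnarray*}
where each $f_i$ is a nonzero nonunit. The point needing an argument is that the free rank $s$ is $0$. This follows because $V$ is a torsion module. By Cayley--Hamilton, $\chi_A(x)$ annihilates $V$. Alternatively, $I,A,\dots,A^{n^2}$ are linearly dependent in an $n^2$-dimensional space, which produces a nonzero annihilating polynomial. Since $\mathbb{F}_p[x]$ is infinite-dimensional over $\mathbb{F}_p$ while $V$ is finite-dimensional, a free summand is impossible, so $s=0$.

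Two normalizations remain. Each ideal $(f_i)$ has a unique monic generator, since the units of $\mathbb{F}_p[x]$ are the nonzero constants, so each $f_i$ can be taken monic. I also discard the factors equal to $1$, since they contribute zero modules. As a consistency check, $\sum_i \deg f_i = n$, and $f_r$ is the minimal polynomial of $A$.

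For the final clause, assume $A^T=I$. Then $(x^T-1)\cdot v=(A^T-I)v=0$ for every $v$, so $x^T-1$ lies in the annihilator $\mathrm{Ann}(V)$. The annihilator of the direct sum is the intersection
\begin{eqnarray*}
\mathrm{Ann}(V)=\bigcap_{i=1}^r (f_i)=(f_r),
\end{eqnarray*}
where the last equality uses the divisibility chain. Indeed, the annihilator of $\mathbb{F}_p[x]/(f_i)$ is exactly $(f_i)$, because the class of $1$ is killed only by multiples of $f_i$. Hence $f_r\mid x^T-1$, and since each $f_i\mid f_r$, every $f_i$ divides $x^T-1$.

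Equivalently, $V$ is a module over $\mathbb{F}_p[x]/(x^T-1)$ by the preceding theorem, so every cyclic summand is annihilated by $x^T-1$. The only step needing care is identifying the annihilator of each cyclic summand, and with it the vanishing of the free part. Both are routine once torsion is established, so I expect no real obstacle beyond invoking the structure theorem correctly.
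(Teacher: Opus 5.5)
Your proposal is correct and takes the same approach as the paper. The paper invokes the structure theorem for finitely generated modules over the PID $\mathbb{F}_p[x]$ and observes that $x^T-1$ annihilates the module. You spell out the steps the paper leaves implicit: finite generation, $s=0$ by torsion, monic normalization, and $\mathrm{Ann}(\mathbb{F}_p[x]/(f_i))=(f_i)$, giving $f_i \mid x^T-1$.
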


\begin{proof}
This follows from the structure theorem for finitely generated modules over
principal ideal domains \cite{dummit_foote,lang_algebra_book}. The divisibility
condition \(f_i(x)\mid x^T-1\) follows from the fact that \(x^T-1\) annihilates
the module.
\end{proof}

Each summand $\mathbb{F}_p[x]/(f_i(x))$ corresponds to a cyclic submodule
generated by a single state vector, whose evolution is governed by the relation
$f_i(A)=0$. From a dynamical perspective, this decomposition plays a role analogous to the
Jordan decomposition in classical linear systems theory. However, instead of
being organized around eigenvalues and generalized eigenvectors, the structure
is determined by invariant factors and annihilating polynomials.

Thus, over finite fields, the global dynamics of the system may be understood as
a direct sum of independent cyclic subsystems, each characterized by a
polynomial relation satisfied by the evolution operator.

\section{Linear Systems as Representations of Time}\label{sec_linear_system_as_representation}

The preceding developments show that the module-theoretic structure of a linear
system is not an additional assumption, but a direct consequence of viewing time
evolution as an action. In classical linear systems theory, one typically begins
with a linear operator $A$ and studies its properties, often introducing
polynomial actions and invariant factors as analytical tools. In contrast, the
viewpoint adopted here reverses this logic: the evolution of the system defines
an action of time, and the associated algebraic structures arise naturally from
this action.

In the discrete-time setting, where the time group is $\mathbb{Z}$, the
corresponding group algebra over a field $k$ is isomorphic to the Laurent
polynomial ring $k[x,x^{-1}]$ \cite{dummit_foote,lang_algebra_book}, where the
indeterminate $x$ represents the unit time shift. Under this identification,
the state space becomes a module over $k[x,x^{-1}]$, with the action
$x \cdot v = Av$.

Similarly, when the time group is $\mathbb{Z}/T\mathbb{Z}$, one has
\begin{eqnarray}
k[\mathbb{Z}/T\mathbb{Z}] \cong k[x]/(x^T - 1),
\end{eqnarray}
so the state space becomes a module over $k[x]/(x^T - 1)$, reflecting the
periodic nature of the dynamics.

Thus, the algebraic structure of a linear system is intrinsic: the state space
is naturally a module over an algebra generated by time, and classical
constructions such as annihilating polynomials and invariant factors arise as
structural features of this module. This leads to the following intrinsic,
coordinate-free definition of a linear dynamical system.

\begin{definition}[\textbf{Linear System}]
Let $G$ be a group and let $V$ be a vector space over a field $k$. A
\emph{linear dynamical system with time group $G$} is a group homomorphism
\begin{eqnarray}
\rho : G \to GL_k(V),
\end{eqnarray}
where $GL_k(V)$ denotes the group of invertible linear operators on $V$.
\end{definition}

In this formulation, the evolution of a state $v \in V$ is given by
\begin{eqnarray}
v_t = \rho(t)v, \quad t \in G,
\end{eqnarray}
and trajectories correspond to orbits of the group action.

\section{Discussions}\label{sec_discussion}

The central point of this paper is that a linear system is most naturally
understood not merely as a matrix, but as an action of time on a vector space.
From this perspective, the matrix is best viewed as the concrete generator of a
representation, rather than as the full object in isolation. Once this shift in
viewpoint is made, many familiar aspects of linear systems theory reorganize
themselves in a surprisingly natural way.

One consequence is that the structure of a linear system is governed by two
ingredients: the algebra of the time group and the algebraic properties of the
underlying field. The time group determines how evolution is organized, while
the field determines which decompositions, factorizations, and invariant
structures are available. From this perspective, many of the usual distinctions
between real, complex, and finite-field systems appear not as separate
phenomena, but as different manifestations of the same underlying principle.

This viewpoint also clarifies the role of spectral methods. Over fields such as
$\mathbb{R}$ and $\mathbb{C}$, spectral theory provides a convenient and often
powerful language for describing the resulting structure. But it is not the
foundation of the theory. Rather, it is one especially successful expression of
a more basic algebraic framework. In settings such as finite fields, where
eigenvalues do not play the same structural role, the underlying module
structure remains fully intact and, in some ways, becomes easier to see.

Viewed this way, the natural object of study in linear systems is not the
matrix in isolation, but the representation of time that it induces. Once this
idea is taken seriously, several natural extensions come into view. Allowing the
time group to be non-abelian leads toward switched or structured systems
\cite{swtiched_system_non_abelian_time}; passing from vector spaces to spaces
built from them leads toward induced nonlinear dynamics
\cite{dani_homogeneous_space,kleeman_linear_algebra_dynamical_systems}; and
replacing discrete time by continuous time leads naturally to the setting of Lie
groups and their infinitesimal generators
\cite{hall_lie_group_representation_book}. 

None of this diminishes the importance of matrices. They remain the concrete
carriers of the dynamics. But they are perhaps best understood as coordinate
representatives of a more structural object. In this sense, the present viewpoint does not replace the classical matrix description, but explains why it works as well as it does. Once that structure is made explicit, linear systems begin to look less like isolated operators and more
like what they fundamentally are: representations of time.

\section{Conclusions}\label{sec_conclusions}

In this paper, we developed a viewpoint that provides a unified description of
different classes of linear systems. In particular, we formulated a
representation-theoretic definition of linear systems by viewing them as actions
of time groups on vector spaces. From this perspective, the structure of a
linear system is determined not only by its evolution operator, but more
fundamentally by the algebra of time and the field over which the system is
defined.

Within this setting, familiar features of linear systems—such as invariant
subspaces and spectral decomposition—arise as manifestations of an underlying
algebraic structure. In settings where spectral methods are not natural, such
as finite fields, this algebraic structure remains fully intact and provides a
natural alternative description. More broadly, the framework suggests that many
systems which appear different in the classical literature are, at a deeper
level, instances of the same underlying idea: a linear system is a
representation of time.

\bibliographystyle{IEEEtran}
\bibliography{references}

\end{document}